\documentclass[letterpaper,american,USenglish]{lipics-v2018}
	
	\usepackage{hyperref}

\newcommand\droppedForArXiv[1]{}

\usepackage[utf8]{inputenc}
\input{preamble}

\ifdraft{}{
	\usepackage{microtype}
}

\ifarxiv{
	\hideLIPIcs
	\nolinenumbers
}{}

\usepackage{hyperref}

\hypersetup{hidelinks}

\title{Entropy Trees and Range-Minimum Queries In Optimal Average-Case Space}

\titlerunning{Entropy Trees and RMQ In Optimal Average-Case Space}

\author{J.\ Ian Munro}{University of Waterloo, Canada}{imunro@uwaterloo.ca}{https://orcid.org/0000-0002-7165-7988}{}
\author{Sebastian Wild}{University of Waterloo, Canada}{wild@uwaterloo.ca}{https://orcid.org/0000-0002-6061-9177}{}

\authorrunning{J.\,I. Munro, and S. Wild}
\Copyright{J.\ Ian Munro, and Sebastian Wild}

\subjclass{%
	\ccsdesc[500]{Theory of computation~Data structures design and analysis}%
	\ccsdesc[300]{Theory of computation~Data compression}%
	\ccsdesc[300]{Theory of computation~Pattern matching}%
}

\keywords{range-minimum queries, succinct data structure, random binary search trees}

\begin{document}

\maketitle

\paragraph{Abstract}

The range-minimum query (RMQ) problem is a fundamental data structuring task
with numerous applications.
Despite the fact that succinct solutions with worst-case optimal $2n+o(n)$
bits of space and constant query time are known,
it has been unknown whether such a data structure can be made adaptive
to the reduced entropy of \emph{random} inputs (Davoodi et al.\ 2014).
We construct a succinct data structure with the optimal $1.736n+o(n)$ bits of space
on average for random RMQ instances, settling this open problem.

Our solution relies on a compressed data structure for binary trees that is 
of independent interest. 
It can store a (static) binary search tree generated by random insertions
in asymptotically optimal expected space and supports many queries in constant time.
Using an instance-optimal encoding of subtrees, we furthermore obtain a ``hyper-succinct''
data structure for binary trees that improves upon the ultra-succinct representation
of Jansson, Sadakane and Sung (2012).

\section{Introduction}

The range-minimum query (RMQ) problem is the following data structuring task:
Given an array $A[1..n]$ of comparable items, construct a data structure 
at preprocessing time that can answer subsequent queries without inspecting $A$ again.
The answer to the query $\mathit{rmq}(i,j)$, for $1\le i\le j\le n$, is the index (in $A$) of the%
\footnote{%
	To simplify the presentation, we assume the elements in $A$ are unique.
	In the general case, we fix a tie-breaking rule, usually to return the leftmost minimum.
	Our data structures extend to any such convention.%
}
minimum in $A[i..j]$, \ie,
\[
		\mathit{rmq}(i,j) 
	\wwrel=
		\mathop{\arg\min}\limits_{i\le k\le j} A[k].
\]
RMQ data structures are fundamental building blocks
to find lowest common ancestors in trees,
to solve the longest common extension problem on strings,
to compute suffix links in suffix trees,
they are used as part of compact suffix trees, 
for (3-sided) orthogonal range searching,
for speeding up document retrieval queries,
finding maximal-scoring subsequences,
and they can be used to compute Lempel-Ziv-77 factorizations given only the suffix array
(see \wref{sec:applications} and \cite[\S3.3]{Fischer2007} for details).
Because of the applications and its fundamental nature, 
the RMQ problem has attracted significant attention in the data structures community, 
in particular the question about solutions with smallest 
possible space usage (succinct data structures).

The hallmark of this line of research is the work of Fischer and Heun~\cite{FischerHeun2011}
who describe a data structure that uses $2n+o(n)$ bits of space and answers queries in constant
time.%
\footnote{%
	We assume here, and in the rest of this article, that we are working on a 
	word-RAM with word size $w = \Theta(\log n)$.%
}
We discuss further related work in \wref{sec:related-work}.

The space usage of Fischer and Heun's data structure is asymptotically optimal \emph{in the worst case} 
in the encoding model (\ie, when $A$ is not available at query time):
the sets of answers to range-minimum queries over arrays of length $n$ is in 
bijection with binary trees with $n$ nodes~\cite{GabowBentleyTarjan1984,Vuillemin1980},
and there are $2^{2n-\Theta(\log n)}$ such binary trees.
We discuss this bijection and its implications for the RMQ problem in detail below.

In many applications, not all possible sets of RMQ answers (resp.\ tree shapes) are possible 
or equally likely to occur. Then, more space-efficient RMQ solutions are possible.
A natural model is to consider arrays containing a random permutation;
then the effective entropy for encoding RMQ answers is asymptotically 
$1.736n$~\cite{GolinIaconoKrizancRamanRaoShende2016} instead of the $2n$ bits.
It is then natural to ask for a range-minimum data structure that uses $1.736n$ bits 
\emph{on average} in this case,
and indeed, this has explicitly been posed as an 
open problem by Davoodi, Navarro, Raman and Rao~\cite{DavoodiNavarroRamanRao2014}.
They used the ``ultra-succinct trees'' of 
Jansson, Sadakane and Sung \cite{JanssonSadakaneSung2012},
to obtain an RMQ data structure with $1.919n + o(n)$ bits on average~\cite{DavoodiNavarroRamanRao2014}.

In this note, we close the space gap and present a data structure that uses the optimal
$1.736n + o(n)$ bits on average to answer range-minimum queries for arrays whose
elements are randomly ordered.
The main insight is to base the encoding on ``subtree-size'' distributions
instead of the node-degree distribution used in the ultra-succinct trees.
To our knowledge, this is the first data structure that exploits non-uniformity of 
a \emph{global} property of trees (sizes of subtrees)~-- as opposed to 
local parameters (such as node degrees)~--
and this extended scope is necessary for optimal compression.

To obtain a data structure with constant-time queries, we modify the tree-covering
technique of Farzan and Munro~\cite{FarzanMunro2012} 
(see \wref{sec:tree-covering})
to use a more efficient encoding for micro trees.
Finally, we propose a ``hyper-succinct'' data structure for trees 
that uses an instance-optimal encoding for the micro trees.
By asymptotic average space, this data structure attains the limit of compressibility
achievable with tree-covering.

The rest of this paper is organized as follows.
We summarize applications of RMQ, its relation to lowest common ancestors
and previous work in the remainder of this first section.
In \wref{sec:preliminaries}, we introduce notation and preliminaries.
In \wref{sec:random-rmq-random-bst}, we define our model of inputs, for which 
\wref{sec:lower-bound} reports a space lower bound.
In \wref{sec:encoding}, we present an optimal encoding of binary trees
\wrt to this lower bound.
We review tree covering in \wref{sec:tree-covering} and modify it in \wref{sec:entropy-trees}
to use our new encoding.
Hyper-succinct trees are described in \wref{sec:hyper-succinct-trees}.
\droppedForArXiv{
We conclude with a discussion of our results in \wref{sec:conclusion}.
}

\subsection{Applications}
\label{sec:applications}

The RMQ problem is an elementary building block in many data structures.
We discuss two exemplary applications here, in which a non-uniform distribution
over the set of RMQ answers is to be expected.

\paragraph{Range searching}
A direct application of RMQ data structures lies in 3-sided orthogonal 2D range searching.
Given a set of points in the plane, we maintain an array of the points sorted by $x$-coordinates
and build a range-minimum data structure for the array of $y$-coordinates
and a predecessor data structure for the set of $x$-coordinates.
To report all points in $x$-range $[x_1,x_2]$ and $y$-range $(-\infty, y_1]$,
we find the indices $i$ and $j$ of the outermost points enclosed in $x$-range, \ie, 
the ranks of the successor of $x_1$ resp.\ the predecessor of $x_2$,
Then, the range-minimum in $[i,j]$ is the first candidate, 
and we compare its $y$-coordinate to $y_1$.
If it is smaller than $y_1$, we report the point and recurse in both subranges;
otherwise, we stop.

A natural testbed is to consider random point sets.
When $x$- and $y$-coordinates are independent of each other,
the ranking of the $y$-coordinates of points sorted by $x$ form a random permutation, 
and we obtain the exact setting studied in this paper.

\paragraph{Longest-common extensions}
A second application of RMQ data structures is the longest-common extension (LCE) problem on strings:
Given a string/text $T$, the goal is to create a data structure that allows to answer LCE queries,
\ie, given indices $i$ and $j$, what is the largest length $\ell$, so that 
$T_{i,i+\ell-1} = T_{j,j+\ell-1}$.
LCE data structures are a building block, \eg, for finding tandem repeats in genomes;
(see Gusfield's book~\cite{Gusfield1997} for many more applications).

A possible solution is to compute
the suffix array $\mathit{SA}[1..n]$, its inverse $\mathit{SA}^{-1}$, and 
the longest common prefix array $\mathit{LCP}[1..n]$ for the text $T$, where 
$\mathit{LCP}[i]$ stores the length of the longest common prefix of the $i$th and $(i-1)$st 
suffixes of $T$ in lexicographic order.
Using an RMQ data structure on $\mathit{LCP}$, $\mathit{lce}(i,j)$ is found as 
$\mathit{LCP}\bigl[\mathit{rmq}_\mathit{LCP}\bigl(\mathit{SA}^{-1}(i),\mathit{SA}^{-1}(j)\bigr) \bigr]$.
%

Since LCE effectively asks for lowest common ancestors of leaves in suffix trees, 
the tree shapes arising from this application are related to the shape of the suffix tree of $T$.
This shape heavily depends on the considered input strings, but
for strings generated by a Markov source, it is known that 
random suffix trees behave asymptotically similar to random tries constructed
from independent strings of the same source~\cite[Cha.\,8]{JacquetSzpankowski2015}.
Those in turn have logarithmic height~-- as do random BSTs.
This gives some hope that the RMQ instances arising in the LCE
problem are compressible by similar means.
Indeed, we could confirm the effectiveness of our compression methods
on exemplary text inputs\droppedForArXiv{, see \wref{sec:compressibility-examples}}.

\subsection{Range-minimum queries, Cartesian trees and lowest common ancestors}
\label{sec:rmq-cartesian-tree-lca}

Let $A[1..n]$ store the numbers $x_1,\ldots,x_n$,
\ie, $x_j$ is stored at index $j$ for $1\le j\le n$.
%
The Cartesian tree $T$ for $x_1,\ldots,x_n$ (resp.\ for $A[1..n]$) 
is a binary tree defined recursively as follows:
If $n=0$, it is the empty tree (``null''). 
Otherwise it is a root whose left child is the Cartesian tree for $x_1,\ldots,x_{j-1}$
and its right child is the Cartesian tree for $x_{j+1},\ldots,x_n$ where
$j$ is the position of the minimum, $j = \mathop{\arg\min}_k A[k]$;
see \wpref{fig:example-bst} for an example.
A classic observation of Gabow et al.~\cite{GabowBentleyTarjan1984} is that 
range-minimum queries on $A$ are isomorphic to lowest-common-ancestor (LCA) queries on $T$
when identifying nodes with their inorder index:
\[
		\mathit{rmq}_A(i,j) 
	\wwrel= 
		\mathit{noderank}_{\mathit{inorder}}\Bigl( 
			\mathit{lca}_T\bigl(
				\mathit{nodeselect}_{\mathit{inorder}}(i),
				\mathit{nodeselect}_{\mathit{inorder}}(j)
			\bigr)
		\Bigr).
\]
We can thus reduce an RMQ instance (with arbitrary input) to an LCA instance
of the same size (the number of nodes in $T$ equals the length of the array).
A widely-used reduction from LCA on arbitrary (ordinal) trees 
writes down the depths of nodes in an Euler tour of the tree.
This produces an RMQ instance of length $2n$ for a tree with $n$ nodes.
However, when $T$ is a \emph{binary} tree, we can replace the Euler tour by a simple 
inorder traversal and thus obtain an RMQ instance of the \emph{same} size.

\subsection{Related Work}
\label{sec:related-work}

Worst-case optimal succinct data structures for the RMQ problem have been presented by 
Fischer and Heun~\cite{FischerHeun2011},
with subsequent simplifications by Ferrada and Navarro~\cite{FerradaNavarro2017}
and Baumstark et al.~\cite{BaumstarkGogHeuerLabeit2017}.
Implementations of (slight variants) of these solutions are part of
widely-used programming libraries for succinct data structures, 
such as Succinct~\cite{succinct} and SDSL~\cite{GogBellerMoffatPetri2014}.

The above RMQ data structures make implicit use of the connection to LCA queries in trees,
but we can more generally formulate that task as a problem on trees:
Any (succinct) data structure for binary trees that supports 
finding nodes by inorder index ($\mathit{nodeselect}_{\mathit{inorder}}$),
computing LCAs, and
finding the inorder index of a node ($\mathit{noderank}_{\mathit{inorder}}$)
immediately implies a (succinct) solution for RMQ.
Most literature on succinct data structures for trees has focused on \emph{ordinal} trees,
\ie, trees with unbounded degree where only the order of children matters, 
but no distinction is made, \eg, between a left and right single child.
Some ideas can be translated to \emph{cardinal} trees 
(and binary trees as a special case thereof)~\cite{FarzanMunro2012,DavoodiRamanRao2017}.
For an overview of ordinal-tree data structures
see, \eg, the survey of Raman and Rao~\cite{RamanRao2013} or Navarro's book~\cite{Navarro2016}.
From a theoretical perspective, the tree-covering technique~-- initially suggested
by Geary, Raman and Raman~\cite{GearyRamanRaman2006}; extended and 
simplified in~\cite{HeMunroRao2012,FarzanMunro2012,DavoodiNavarroRamanRao2014}~--
is the most versatile and expressive representation.
We present the main results and techniques, 
restricted to the subset of operations and adapted to binary trees,
in \wref{sec:tree-covering}.

A typical property of succinct data structures is that their space usage is determined only by the
\emph{size} of the input. 
For example, all of the standard tree representations use $2n+o(n)$ bits of space
for \emph{any} tree with $n$ nodes, and 
the same is true for the RMQ data structures cited above.
This is in spite of the fact that for certain distributions over possible tree shapes,
the entropy can be much lower than $2n$ bits~\cite{KiefferYangSzpankowski2009}.

There are a few exceptions.
For RMQ, Fischer and Heun~\cite{FischerHeun2011} show that range-minimum queries can still be
answered efficiently when the array is compressed to $k$th order empirical entropy.
For random permutations, the model studied in this article, 
this does not result in significant savings, though.
Barbay, Fischer and Navarro~\cite{BarbayFischerNavarro2012} used LRM-trees to
obtain an RMQ data structure that adapts to presortedness in $A$, \eg, the number of (strict) runs.
Again, for the random permutations considered here, this would not result in space reductions.
Recently, Jo, Mozes and Weimann~\cite{JoMozesWeimann2018} designed RMQ solutions 
for grammar-compressed input arrays resp.\ DAG-compressed Cartesian trees.
The amount of compression for random permutation is negligible for the former; 
for the latter it is less clear, but in both cases, they have to give up constant-time queries.

In the realm of ordinal trees, Jansson, Sadakane and Sung~\cite{JanssonSadakaneSung2012}
designed an ``ultra-succinct'' data structure by
replacing the unary code for node degrees in the DFUDS representation by an encoding 
that adapts to the \emph{distribution} of node degrees.
Davoodi et al.~\cite{DavoodiNavarroRamanRao2014} used the same encoding in tree covering
to obtain the first ultra-succinct encoding for binary trees with inorder support.
They show that for random RMQ instances, a node in the Cartesian tree has probability $\frac13$
to be binary resp.\ a leaf, and probability $\frac16$ to have a single left resp.\ right child.
The resulting entropy is $\mathcal H(\frac13,\frac13,\frac16,\frac16) \approx 1.91$ bit per node
instead of the $2$ bit for a trivial encoding.

Golin et al.~\cite{GolinIaconoKrizancRamanRaoShende2016} showed that
$1.736n$ bits are (asymptotically) necessary and sufficient to encode a random RMQ instance,
but they do not present a data structure that is able to make use of the encoding.
The constant in the lower bound also appears in the entropy of BSTs build from random 
insertions~\cite{KiefferYangSzpankowski2009}, for reasons that will become obvious in \wref{sec:random-rmq-random-bst}.
Similarly, the encoding of Golin et al.\ has independently been described by 
Magner, Turowski and Szpankowski~\cite{MagnerTurowskiSzpankowski2018}
to compress trees (without attempts to combine it with efficient access to the stored object).
There is thus a gap left between the lower bound and 
the best data structure with efficient queries, both for RMQ and for representing binary trees.

\section{Notation and Preliminaries}
\label{sec:preliminaries}

We write $[n..m] = \{n,\ldots,m\}$ and $[n] = [1..n]$ for integers $n$, $m$.
We use $\lg$ for $\log_2$ and leave the basis of $\log$ undefined (but constant);
(any occurrence of $\log$ outside an Landau-term should thus be considered a mistake).
$\mathcal T_n$ denotes the set of binary trees on $n$ nodes, \ie, every node has 
a left and a right child (both potentially empty / null).
For a tree $t\in\mathcal T_n$ and one of its nodes $v\in t$, we write
$\mathit{st}_t(v)$ for the subtree size of $v$ in $t$, \ie, the number of nodes $w$
(including $w=v$)
for which $v$ lies on the path from the root to $w$.
When the tree is clear form the context, we shortly write $\mathit{st}(v)$.

\subsection{Bit vectors}

We use the data structure of 
Raman, Raman, and Rao~\cite{RamanRamanRao2007}
for compressed bitvectors. 
They show the following result;
we use it for two more specialized data structures below.

\begin{lemma}[Compressed bit vector]
\label{lem:compressed-bit-vectors}
	Let $\mathcal{B}$ be a
	bit vector of length $n$, containing $m$ $1$-bits.  In the
	word-RAM model with word size $\Theta(\lg n)$ bits, there is a data
	structure of size 
	\begin{align*}
			\lg \binom{n}{m} \wbin+ O\biggl(\frac{n\log \log n}{\log n}\biggr)
		&\wwrel\le 
			n H\Bigl(\frac{m}{n}\Bigr) \wbin+ O\biggl(\frac{n\log \log n}{\log n}\biggr) 
	\\	&\wwrel= 
			m \lg \Bigl(\frac nm\Bigr) \wbin+ O\biggl(\frac{n \log \log n}{\log n}+m\biggr)
	\end{align*}
	bits that
	supports the following operations in $O(1)$ time, 
	for any $i \in [1,n]$:
	\begin{itemize}
		\item $\mathit{access}(\mathcal{V}, i)$: return the bit at index $i$ in $\mathcal{V}$.
		\item $\mathit{rank}_\alpha(\mathcal{V}, i)$: return the number of bits with
		value $\alpha \in \{0,1\}$ in $\mathcal{V}[1..i]$.
		\item $\mathit{select}_\alpha(\mathcal{V}, i)$: return the index of the $i$-th
		bit with value $\alpha \in \{0,1\}$.
	\end{itemize}
\end{lemma}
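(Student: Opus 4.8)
This is the compressed-bitvector theorem of Raman, Raman, and Rao~\cite{RamanRamanRao2007}, so the plan is to reconstruct their argument, split into the chain of (in)equalities and the data structure itself. The inequalities are pure arithmetic: $\lg\binom nm\le nH(m/n)$ follows from $1=\sum_k\binom nk p^k(1-p)^{n-k}\ge\binom nm p^m(1-p)^{n-m}$ with $p=m/n$, which rearranges to $\binom nm\le p^{-m}(1-p)^{-(n-m)}=2^{nH(p)}$; and the last line is obtained by writing $nH(m/n)=m\lg(n/m)+(n-m)\lg\frac{n}{n-m}$ and bounding the second summand by $(n-m)\lg\bigl(1+\frac{m}{n-m}\bigr)\le m/\ln 2=O(m)$ via $\ln(1+x)\le x$, the error term $O(n\log\log n/\log n)$ carrying over unchanged.

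For the structure I would cut $\mathcal B$ into $N=\lceil n/t\rceil$ blocks of $t=\lceil\frac12\lg n\rceil$ bits (zero-padding the last block) and encode each block $B$ as a pair $(\text{class},\text{offset})$: the \emph{class} $c(B)$, its number of $1$-bits, kept in a fixed-width array using $\lceil\lg(t+1)\rceil=O(\log\log n)$ bits per block, that is $O(n\log\log n/\log n)$ bits total; and the \emph{offset} $o(B)$, the rank of $B$ among all $t$-bit strings of class $c(B)$ in a fixed enumeration, stored in $\lceil\lg\binom{t}{c(B)}\rceil$ bits, concatenated over all blocks. The combined offset length is $\sum_B\lceil\lg\binom{t}{c(B)}\rceil\le N+\lg\prod_B\binom{t}{c(B)}\le N+\lg\binom nm$, where the last step is the generalized Vandermonde identity $\binom nm=\sum_{c_1+\cdots+c_N=m}\prod_B\binom{t}{c_B}$, so each summand (in particular the one matching our blocks) is at most $\binom nm$; padding to a multiple of $t$ adds only a further $O(\log^2 n)$ bits. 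This already yields the claimed leading space term.

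It remains to make the three operations run in $O(1)$ time. Intra-block work is handled by a precomputed lookup table indexed by $(\text{class},\text{offset},\text{position})$: there are only $2^t\le\sqrt n$ block patterns, so this table, plus a companion table mapping $(\text{class},\text{offset})$ to the explicit $t$-bit string, occupies $O(\sqrt n\,t\log t)=o(n)$ bits. To stitch blocks together I would add two-level sampling: bundle $\Theta(\lg n)$ consecutive blocks into a superblock of $\Theta(\lg^2 n)$ bits, store per superblock the number of $1$-bits before it and the start position of its first offset in the offset string ($O(\log n)$ bits each, and only $O(n/\lg^2 n)$ superblocks), and store per block the number of $1$-bits and the number of offset bits occurring before it within its superblock ($O(\log\log n)$ bits each, since both are $O(\lg^2 n)$, totalling $O(n\log\log n/\log n)$). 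Then $\mathit{access}$ and $\mathit{rank}_\alpha$ become an $O(1)$ descent to the right superblock and block plus one table lookup; $\mathit{select}_\alpha$ additionally uses the standard select sampling, recording every $\Theta(\lg^2 n)$-th $\alpha$-bit and then searching the $O(\lg n)$ blocks between consecutive samples by table. Summing the components gives space $\lg\binom nm+O(n\log\log n/\log n)$ bits.

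I expect the main obstacle to be controlling time and space \emph{simultaneously} in the offset-navigation step: to locate a block's offset code we need the combined offset length of the preceding blocks, yet only $O(\log\log n)$ bits per block are available for this, which forces the offset bookkeeping to be relative to a polylogarithmically short superblock rather than global; arranging the various sampling granularities so that every query still touches only $O(1)$ tables is the heart of the construction. Everything else reduces to the arithmetic above and routine word-RAM bookkeeping, with complete details in~\cite{RamanRamanRao2007}.
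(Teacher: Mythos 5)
The paper does not prove this lemma: it is cited as a black box from Raman, Raman, and Rao~\cite{RamanRamanRao2007} (the sentence immediately preceding the statement reads ``They show the following result''), so there is no in-paper proof to compare against. Your reconstruction of the RRR argument is essentially correct and is the standard one: the arithmetic chain $\lg\binom nm\le nH(m/n)=m\lg(n/m)+O(m)$ via the binomial bound and $\ln(1+x)\le x$; the per-block $(\text{class},\text{offset})$ encoding with $t=\Theta(\lg n)$; the key space bound $\sum_B\lg\binom{t}{c_B}\le\lg\binom{n'}{m}$ via the Vandermonde convolution (with the $O(\log^2 n)$-bit correction for padding $n$ up to $n'=Nt$); the $O(\log\log n)$-bit class array; and the two-level superblock/block sampling (global counts at the $\Theta(\log^2 n)$-bit granularity, relative $O(\log\log n)$-bit counts and offset displacements inside a superblock) plus $O(\sqrt n\cdot\mathrm{poly}\log n)=o(n)$-bit lookup tables, giving $O(1)$ access and rank. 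Your treatment of select is a bit terser than a genuinely $O(1)$-time solution requires~--- one needs the Clark/Munro-style hierarchy that distinguishes sparse stretches (positions stored explicitly) from dense stretches (recurse into a polylogarithmic range, then table lookup)~--- but you explicitly defer to ``the standard select sampling,'' so this is a fair high-level sketch rather than a gap. In short: the proposal is sound, but the paper itself offers no proof to compare it to.
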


\subsection{Variable-cell arrays}

Let $o_1,\ldots, o_m$ be $m$ objects
where $o_i$ needs $s_i$ bits of space.
The goal is to store
an ``array'' $O$ of the objects contiguously in memory, so that
we can access the $i$th element in constant time as $O[i]$;
in case $s_i > w$, we mean by ``access'' to find its starting position.
We call such a data structure a variable-cell array.

\begin{lemma}[Variable-cell arrays]
\label{lem:variable-cell-arrays}
	There is a variable-cell array data structure for 
	objects $o_1,\ldots, o_m$ of sizes $s_1,\ldots,s_m$
	that occupies
	\[\sum s_i \bin+ m \lg(\max s_i) \bin+ 2 m \lg \lg n \bin+ O(m)\]
	bits of space.
\end{lemma}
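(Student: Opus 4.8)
The plan is to lay the objects out contiguously and reduce $\mathit{access}$ to a constant-time prefix-sum query on the size sequence. Write $M=\max_i s_i$ and $S=\sum_i s_i$, and concatenate $o_1,\dots,o_m$ into one bit string of $S$ bits; this accounts for the $\sum s_i$ term. The starting position of $o_i$ is $p_i = 1+\sum_{j<i}s_j$, and its length is $p_{i+1}-p_i$ with $p_{m+1}=S+1$, so producing $p_i$ in $O(1)$ time is enough (it also lets us read $o_i$ itself when $s_i\le w$). Throughout I assume, as holds in all our applications and as the $\lg\lg n$ in the statement presupposes, that the whole structure is of size $n^{O(1)}$, so $S$, $M$, and $m\le S$ are all $n^{O(1)}$ and every quantity of magnitude $\le S$ fits in $O(1)$ machine words.

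For the prefix sums I would use the standard two-level sampling. Partition the $m$ objects into consecutive blocks of $b=\lceil\lg n\rceil$ objects, and store two fixed-width arrays. The \emph{block-pointer} array records, for each block $k\ge 2$, the starting position $\Pi_k$ of its first object (block $1$ always starts at position $1$ and needs no entry); with $\lceil\lg(S+1)\rceil=O(\lg n)$ bits per entry and at most $m/b$ entries, this array occupies $(m/\lceil\lg n\rceil)\cdot O(\lg n)=O(m)$ bits. The \emph{within-block offset} array $D[1..m]$ records, for each $i$, the value $\delta_i=p_i-\Pi_{k(i)}$ where $k(i)$ is the block of $o_i$; since $\delta_i$ is a sum of at most $b-1$ sizes, $0\le\delta_i<bM$, so a uniform field width of $\lceil\lg(bM+1)\rceil=\lg M+\lg b+O(1)=\lg M+\lg\lg n+O(1)$ bits suffices, for a total of $m\lg M+m\lg\lg n+O(m)$ bits. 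An $\mathit{access}$ to $O[i]$ computes $k(i)=\lceil i/b\rceil$, looks up $\Pi_{k(i)}$ and $D[i]$, and returns their sum in $O(1)$ time; a few more bits record $M$, $b$, $m$, $S$ and are absorbed into the bound. The total is $S+m\lg M+m\lg\lg n+O(m)$, which lies within the claimed $\sum s_i+m\lg(\max s_i)+2m\lg\lg n+O(m)$ (the spare $m\lg\lg n$ is kept for convenience).

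I do not expect a genuine obstacle: the only delicate points are choosing the block size $b=\Theta(\lg n)$ so that the pointer overhead collapses to $O(m)$ while each offset still costs only $O(\lg M+\lg\lg n)$ bits, and checking the word-RAM bookkeeping when individual objects exceed a word (harmless, since everything stays $O(1)$ words under the size assumption). As an alternative, one could mark the block boundaries inside the concatenated string using the compressed bit vector of \wref{lem:compressed-bit-vectors} rather than the explicit $\Pi$-array; the explicit array is used here because it avoids that structure's lower-order redundancy term and handles arbitrarily large objects directly.
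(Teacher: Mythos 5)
Your proposal is correct and is essentially the paper's own proof: concatenate the objects into one bit string and reduce access to prefix sums via a two-level index consisting of a sparse array of block starting positions plus a dense array of within-block offsets. The only difference is the block size — you use $b=\lceil\lg n\rceil$ (giving $m\lg M + m\lg\lg n + O(m)$ for the offset array and $O(m)$ for the block pointers), while the paper uses $b=\lg^2 n$ (giving $m\lg M + 2m\lg\lg n$ and $o(m)$ respectively) — and both choices satisfy the stated bound.
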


\begin{proof}
Let $n = \sum s_i$ be the total size of all objects and
denote by $s = \min s_i$, $S = \max s_i$ and $\bar s = n/m$
the minimal, maximal and average size of the objects, respectively.
We store the concatenated bit representation in a bitvector $B[1..n]$
and use a two-level index to find where the $i$th object begins.

Details:
Store the starting index of every $b$th object
in an array $\mathit{blockStart}[1..\lceil m/b\rceil]$.
The space usage is $\frac mb \lg n$.
In a second array $\mathit{blockLocalStart}[1..m]$,
we store for every object its starting index within its block.
The space for this is $m \lg(b S)$:
we have to prepare for the worst case of a block full of maximal objects.

It remains to choose the block size;
$b = \lg^2 n$ yields the claimed bounds.
Note that $\mathit{blockStart}$ is $o(n)$ 
(for $b = \omega(\lg n / \bar s)$), but
$\mathit{blockLocalStart}$ has, in general, non-negligible space overhead.
The error term only comes from ignoring ceilings around the logarithms;
its constant can be bounded explicitly.
\end{proof}

\subsection{Compressed piecewise-constant arrays}


Let $A[1..n]$ be a static array of objects of size $w$ bits each.
The only operation is the standard read-access $A[i]$
where $i\in[n]$.
Let $m$ be the number of indices $i$
with $A[i] \ne A[i-1]$, \ie, the number of times we see the value in $A$ \emph{change}
in a sequential scan.
We always count $i=1$ as a change, so $m\ge 1$.
Between two such change indices, the value in $A$ is constant.
We hence call such arrays piecewise-constant arrays.
We can store such arrays in compressed form.

\begin{lemma}[Compressed piecewise-constant array]
\label{lem:piecewise-constant-array}
	Let $A[1..n]$ be a static array with $m$ value changes.
	There is a data structure for storing $A$ that allows 
	constant-time read-access to $A[i]$ using
	$m w + m \lg \frac nm + O \bigl(\frac{n \lg\lg n}{\lg n} + m\bigr)$ bits of space.
\end{lemma}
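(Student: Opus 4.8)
The plan is to encode $A$ by its maximal constant runs: one compressed bit vector recording where the runs begin, together with a flat array holding the value of each run. Since the runs partition $[1..n]$ into $m$ intervals, locating the run that contains a query index reduces to a single $\mathit{rank}$ query on the bit vector.

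Concretely, I would first build a bit vector $\mathcal B[1..n]$ with $\mathcal B[i]=1$ exactly at the change indices, i.e.\ $\mathcal B[1]=1$ and, for $i\ge 2$, $\mathcal B[i]=1$ iff $A[i]\ne A[i-1]$. By assumption $\mathcal B$ contains exactly $m$ one-bits, so storing it with the compressed bit vector of \wref{lem:compressed-bit-vectors} costs $\lg\binom nm + O\bigl(\frac{n\log\log n}{\log n}\bigr) \le m\lg\frac nm + O\bigl(\frac{n\log\log n}{\log n}+m\bigr)$ bits and supports $\mathit{rank}_1$ in $O(1)$ time. Second, I would store an array $V[1..m]$ where $V[k]$ holds the common value of $A$ on the $k$th maximal constant run, in left-to-right order. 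As every entry of $V$ occupies exactly $w$ bits, $V$ is an ordinary array using $mw$ bits with $O(1)$-time access; no variable-cell machinery is needed here.

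A query $A[i]$ is then answered by computing $k = \mathit{rank}_1(\mathcal B,i)$, the number of change indices that are $\le i$; this is precisely the index of the run containing position $i$, so the answer is $V[k]$. Both steps run in $O(1)$ time, and the total space is the sum of the two parts, $mw + m\lg\frac nm + O\bigl(\frac{n\log\log n}{\log n}+m\bigr)$, as claimed. There is essentially no obstacle: the only estimate used beyond the cited lemma is $\lg\binom nm \le m\lg\frac nm + O(m)$, which is exactly the inequality chain already recorded in the statement of \wref{lem:compressed-bit-vectors} (it follows from $\lg\binom nm \le nH(m/n)$ together with $(n-m)\lg\frac n{n-m} \le m\lg e$).
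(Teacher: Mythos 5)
Your proposal is correct and matches the paper's proof essentially verbatim: the same compressed change-bitvector (via \wref{lem:compressed-bit-vectors}), the same flat value array $V[1..m]$, the same $\mathit{rank}_1$-then-index query, and the same accounting for the space bound.
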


\begin{proof}
	We store an array $V[1..m]$ of the distinct values in the order they appear in $A$,
	and a bitvector $C[1..n]$ where $C[i] = 1$ iff $A[i] \ne A[i-1]$.
	We always set $C[1] = 1$.
	Since $C$ has $m$ ones, we can store it in compressed form using
	\wref{lem:compressed-bit-vectors}.
	Then, $A[j]$ is given by $V[\mathit{rank_1}(C,j)]$, which can be found in $O(1)$.
	The space is $mw$ for $V$ and $m \lg \frac nm + O(m + n \lg \lg n / \lg n)$ for~$C$.
\end{proof}

This combination of a sparse (compressed) bitvector for changes
and an ordinary (dense) vector for values was used in previous work, \eg,
\cite{GearyRamanRaman2006,DavoodiNavarroRamanRao2014}, without giving it a name. 
We feel that a name helps reduce the conceptual and notational complexity of later constructions.

\begin{remark}[Further operations]
\label{rem:piecewise-constant-further-ops}
	Without additional space, the compressed piecewise-constant array data structure can 
	also answer the following query in constant time using rank and select on the bitvector:
	$\mathit{runlen}(i)$, the number of entries to the left of index $i$ that contain the same value;
		$\mathit{runlen}(i) = \max\{\ell \ge 1 : \forall i-\ell < j < i : A[j] = A[i] \}$.
\end{remark}

\section{Random RMQ and random BSTs}
\label{sec:random-rmq-random-bst}

We consider the random permutation model for RMQ: Every (relative) ordering of
the elements in $A[1..n]$ is considered to occur with the same probability.
Without loss of generality, we identify these $n$ elements with their rank,
\ie, $A[1..n]$ contains a random permutation of $[1..n]$.
We refer to this as a random RMQ instance.

Let $T_n\in\mathcal T_n$ be (the random shape of) the Cartesian tree 
associated with a random RMQ instance (recall \wref{sec:rmq-cartesian-tree-lca}).
We will drop the subscript $n$ when it is clear form the context.
We can precisely characterize the distribution of $T_n$:
Let $t\in\mathcal T_n$ be a given (shape of a) binary tree,
and let $i$ be the inorder index of the root of $t$.
The minimum in a random permutation is located at every position $i\in[n]$ with 
probability $\frac 1n$.
Apart from renaming, the subarrays $A[1..i-1]$ and $A[i+1..n]$ contain a random permutation
of $i-1$ resp. $n-i$ elements, and these two permutations are independent of each other
conditional on their sizes.
We thus have
\begin{align}
\label{eq:prob-Tn=t-rec}
		\Prob{T_n = t}
	&\wwrel=
		\begin{dcases}
			1, & n \le 1; \\
			\tfrac1n \cdot \Prob{T_{i-1} = t_\ell} \cdot \Prob{T_{n-i} = t_r},
				& n \ge 2,
		\end{dcases}
\end{align}
where $t_\ell$ and $t_r$ are the left resp.\ right subtrees of the root of $t$.
Unfolding inductively yields
\begin{align}
\label{eq:prob-Tn=t}
		\Prob{T_n = t}
	&\wwrel=
		\prod_{v\in t} \frac1{\mathit{st}(v)},
\end{align}
%
where the product is understood to range over all nodes $v$ in $t$.
Recall that $\mathit{st}(v)$ denotes the subtree size of $v$.
The very same distribution over binary trees also arises for (unbalanced) 
binary search trees (BSTs) when they are build by successive insertions 
from a random permutation (``random BSTs'').
Here, the root's inorder rank is not given by the position of the minimum in the input,
but by the rank of the first element.
That means, the Cartesian tree for a permutation is precisely 
the BST generated by successively inserting the \emph{inverse permutation.}
Since the inverse of a random permutation is itself
distributed uniformly at random over all permutations, 
the resulting tree-shape distributions are the same.

\section{Lower bound}
\label{sec:lower-bound}

Since the sets of answers to range-minimum queries is in bijection with Cartesian trees,
the entropy $H_n$ of the distribution of the shape of the Cartesian tree 
gives an information-theoretic lower bound 
for the space required by any RMQ data structure in the encoding model.
For random RMQ instances, we find from \weqref{eq:prob-Tn=t-rec} and the 
decomposition rule of the entropy that
$H_n$ fulfills the recurrence
\begin{align}
		H_0 \wrel= H_1
	&\wwrel=
		0
\\
		H_n
	&\wwrel=
		\lg n + \frac 1n \sum_{i=1}^n (H_{i-1}+H_{n-i})
		,\qquad(n\ge 2).
\label{eq:entropy-rec}
\end{align}
We decompose the entropy of the choice of the entire tree shape into 
first choosing the root's rank (entropy $\lg n$ since we uniformly choose between $n$ outcomes)
and adding the entropy for choosing the subtrees conditional on the given subtree sizes.
The above recurrence is very related with recurrences for random binary search trees or 
the average cost in quicksort, only with a different ``toll function''.
Kieffer, Yan and Szpankowski~\cite{KiefferYangSzpankowski2009} show%
\footnote{%
	Hwang and Neininger~\cite{HwangNeininger2002} showed earlier that \weqref{eq:entropy-rec} can be solved 
	exactly for arbitrary toll function, and $\lg n$ is one such.%
}
that it
solves to 
\begin{align*}
		H_n 
	&\wwrel= 
		\lg(n) + 2(n+1) \sum_{i=2}^{n-1} \frac{\lg i}{(i+2)(i+1)}
\\	&\wwrel\sim
		2n\sum_{i=2}^{\infty} \frac{\lg i}{(i+2)(i+1)}
\\	&\wwrel\approx 1.7363771 n 
\end{align*}
(Note that Kieffer et al.~use $n$ for the number of external leaves, where we count (internal) nodes, 
so there is an off-by-one in the meaning of $n$.)
The asymptotic approximation is actually a fairly conservative upper bound for small $n$, see
\wref{fig:entropy}.

\begin{figure}
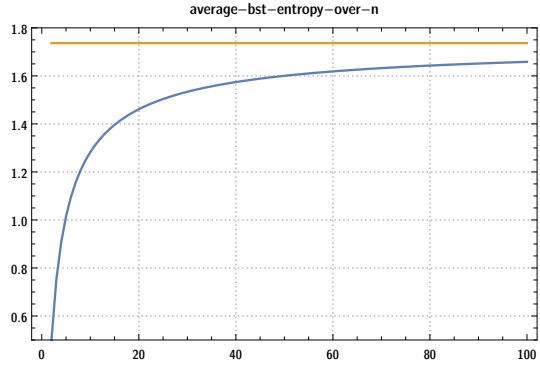

	\mathematicaPlotPdf{average-bst-entropy-over-n}
	\caption{%
		$H_n/n$ (blue) and its limit $\lim_{n\to\infty} H_n/n \approx 1.736$ (yellow) for $n\le 100$.
	}
\label{fig:entropy}
\end{figure}

\section{An optimal encoding: subtree-size code}
\label{sec:encoding}

The formula for $\Prob{T_n=t}$, \weqref{eq:prob-Tn=t},
immediately suggests a route for an optimal encoding:
We can rewrite the lower bound, $H_n$, as
\begin{align*}
	H_n
	&\wwrel=
		- \sum_{T\in\mathcal T_n} \Prob T \cdot \lg(\Prob T)
\\	&\wwrel=
		\sum_{T\in\mathcal T_n} \Prob T \cdot 
			\underbrace{ \sum_{v\in t}\lg \bigl( \mathit{st}(v) \bigr) }
				_ {\textstyle \mathcal H_{\mathit{st}(t)}}
\\[-3ex]	&\wwrel=
		\E{\mathcal H_{\mathit{st}}(T)}.
\numberthis\label{eq:Hn=E-HstT}
\end{align*}
That means, an encoding that spends $\mathcal H_{\mathit{st}}(t)$ bits to 
encode tree $t\in\mathcal T_n$
has optimal expected code length!
In a slight abuse of the term, 
we will call the quantity $\mathcal H_{\mathit{st}} (t) = \sum_{v\in t} \lg\bigl(\mathit{st}(v)\bigr)$ the 
\emph{subtree-size entropy} of a binary tree $t$.

Expressed for individual nodes, \weqref{eq:Hn=E-HstT} says
\textsl{we may spend $\lg s$ bits for a node whose subtree size is $s$.}
Now, what can we store in $\lg s$ bits for each node that uniquely describes the tree?
One option is the size of each node's \emph{left subtree}.
If a node $v$ has subtree size $\mathit{st}(v) = s$, then its
left subtree has size $\mathit{ls}(v) = \mathit{st}(\mathit{left}(v)) \in [0..s-1]$ (``left size''),
a quantity with $s$ different values.
Moreover, for random BSTs, each if these $s$ values is equally likely.

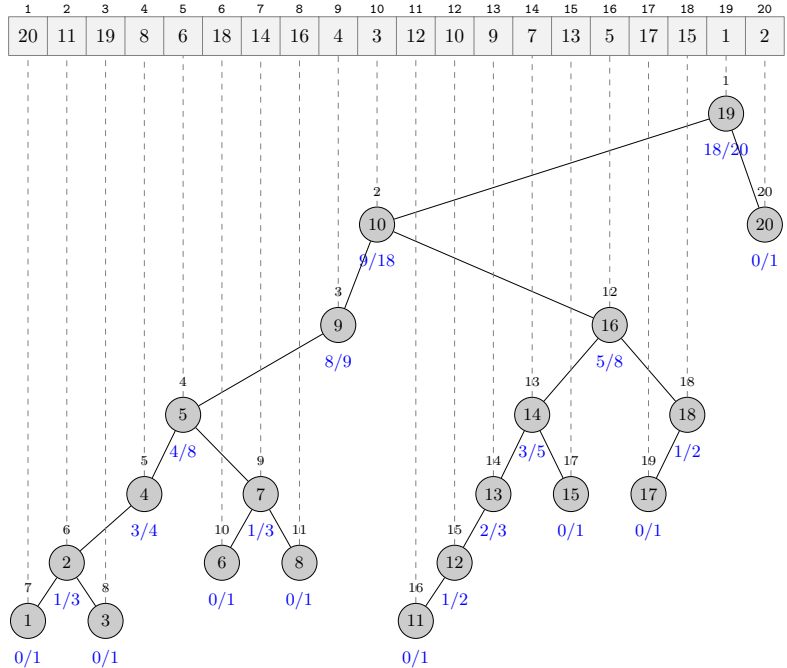
\begin{figure}[tbh]

	\plaincenter{\resizebox{.75\textwidth}!{
	\begin{tikzpicture}[
			scale=.7,
			tree node/.style         = {circle, draw, fill=black!20, font=\small, minimum size=18pt, inner sep=0pt},
			preorder label/.style    = {font=\scriptsize},
			subtreesize label/.style = {font=\small, text=blue},
			tree edge/.style         = {thin},
			dfs block/.style         = {draw=blue!30,line width=10pt},
			dfs block span/.style    = {dfs block,line width=3pt,|-|,shorten <=-1pt,shorten >=-1pt},
	]
	
		\def\hobbysize{20pt}
		\node[tree node] (n1) at (1,3.409091) {$1$} ;
		\node[preorder label,above=2pt of n1] {$7$} ;
		\node[subtreesize label,below=2pt of n1] {$0/1$} ;
		\node[tree node] (n2) at (2,4.909091) {$2$} ;
		\node[preorder label,above=2pt of n2] {$6$} ;
		\node[subtreesize label,below=2pt of n2] {$1/3$} ;
		\node[tree node] (n3) at (3,3.409091) {$3$} ;
		\node[preorder label,above=2pt of n3] {$8$} ;
		\node[subtreesize label,below=2pt of n3] {$0/1$} ;
		\node[tree node] (n4) at (4,6.681818) {$4$} ;
		\node[preorder label,above=2pt of n4] {$5$} ;
		\node[subtreesize label,below=2pt of n4] {$3/4$} ;
		\node[tree node] (n5) at (5,8.727273) {$5$} ;
		\node[preorder label,above=2pt of n5] {$4$} ;
		\node[subtreesize label,below=2pt of n5] {$4/8$} ;
		\node[tree node] (n6) at (6,4.909091) {$6$} ;
		\node[preorder label,above=2pt of n6] {$10$} ;
		\node[subtreesize label,below=2pt of n6] {$0/1$} ;
		\node[tree node] (n7) at (7,6.681818) {$7$} ;
		\node[preorder label,above=2pt of n7] {$9$} ;
		\node[subtreesize label,below=2pt of n7] {$1/3$} ;
		\node[tree node] (n8) at (8,4.909091) {$8$} ;
		\node[preorder label,above=2pt of n8] {$11$} ;
		\node[subtreesize label,below=2pt of n8] {$0/1$} ;
		\node[tree node] (n9) at (9,11.045455) {$9$} ;
		\node[preorder label,above=2pt of n9] {$3$} ;
		\node[subtreesize label,below=2pt of n9] {$8/9$} ;
		\node[tree node] (n10) at (10,13.636364) {$10$} ;
		\node[preorder label,above=2pt of n10] {$2$} ;
		\node[subtreesize label,below=2pt of n10] {$9/18$} ;
		\node[tree node] (n11) at (11,3.409091) {$11$} ;
		\node[preorder label,above=2pt of n11] {$16$} ;
		\node[subtreesize label,below=2pt of n11] {$0/1$} ;
		\node[tree node] (n12) at (12,4.909091) {$12$} ;
		\node[preorder label,above=2pt of n12] {$15$} ;
		\node[subtreesize label,below=2pt of n12] {$1/2$} ;
		\node[tree node] (n13) at (13,6.681818) {$13$} ;
		\node[preorder label,above=2pt of n13] {$14$} ;
		\node[subtreesize label,below=2pt of n13] {$2/3$} ;
		\node[tree node] (n14) at (14,8.727273) {$14$} ;
		\node[preorder label,above=2pt of n14] {$13$} ;
		\node[subtreesize label,below=2pt of n14] {$3/5$} ;
		\node[tree node] (n15) at (15,6.681818) {$15$} ;
		\node[preorder label,above=2pt of n15] {$17$} ;
		\node[subtreesize label,below=2pt of n15] {$0/1$} ;
		\node[tree node] (n16) at (16,11.045455) {$16$} ;
		\node[preorder label,above=2pt of n16] {$12$} ;
		\node[subtreesize label,below=2pt of n16] {$5/8$} ;
		\node[tree node] (n17) at (17,6.681818) {$17$} ;
		\node[preorder label,above=2pt of n17] {$19$} ;
		\node[subtreesize label,below=2pt of n17] {$0/1$} ;
		\node[tree node] (n18) at (18,8.727273) {$18$} ;
		\node[preorder label,above=2pt of n18] {$18$} ;
		\node[subtreesize label,below=2pt of n18] {$1/2$} ;
		\node[tree node] (n19) at (19,16.500000) {$19$} ;
		\node[preorder label,above=2pt of n19] {$1$} ;
		\node[subtreesize label,below=2pt of n19] {$18/20$} ;
		\node[tree node] (n20) at (20,13.636364) {$20$} ;
		\node[preorder label,above=2pt of n20] {$20$} ;
		\node[subtreesize label,below=2pt of n20] {$0/1$} ;
	
		\draw[tree edge] (n19) to (n20) ;
		\draw[tree edge] (n19) to (n10) ;
		\draw[tree edge] (n10) to (n16) ;
		\draw[tree edge] (n10) to (n9) ;
		\draw[tree edge] (n9) to (n5) ;
		\draw[tree edge] (n5) to (n7) ;
		\draw[tree edge] (n5) to (n4) ;
		\draw[tree edge] (n4) to (n2) ;
		\draw[tree edge] (n2) to (n3) ;
		\draw[tree edge] (n2) to (n1) ;
		\draw[tree edge] (n7) to (n8) ;
		\draw[tree edge] (n7) to (n6) ;
		\draw[tree edge] (n16) to (n18) ;
		\draw[tree edge] (n16) to (n14) ;
		\draw[tree edge] (n14) to (n15) ;
		\draw[tree edge] (n14) to (n13) ;
		\draw[tree edge] (n13) to (n12) ;
		\draw[tree edge] (n12) to (n11) ;
		\draw[tree edge] (n18) to (n17) ;
	
	%
	%

	\begin{scope}[shift={(0,18)}]
		\foreach \i/\x in {1/20,2/11,3/19,4/8,5/6,6/18,7/14,8/16,9/4,10/3,11/12,12/10,13/9,14/7,15/13,16/5,17/17,18/15,19/1,20/2} {
			\draw[help lines,dashed] (n\i) -- (\i,0.5) ;
		}
		\fill[black!5] (0.5,0) rectangle (20.5,1);
		\draw[black!50,shift={(-.5,0)}] (1,0) grid (21,1) ;
		\foreach \i/\x in {1/20,2/11,3/19,4/8,5/6,6/18,7/14,8/16,9/4,10/3,11/12,12/10,13/9,14/7,15/13,16/5,17/17,18/15,19/1,20/2} {
			\node at (\i,0.5) {\x} ;
			\node at (\i,1.2) {\smaller[2]\ttfamily \i} ;
		}
	\end{scope}
	\end{tikzpicture}
	}}
	\caption{%
		Example of the Cartesian tree $T$ for an array of 20 numbers.
		Each node shows the inorder number (in the node, coincides with the array index), 
		its preorder index (above the node) and the sizes of left subtree and its total subtree 
		(blue, below the node).
		We have $\mathcal H_{\mathit{st}}(T) \approx 28.74$, slightly below the expectation
		$H_{20} \approx 29.2209$.
		The arithmetic code for the preorder sequence of left tree sizes
		is \texttt{111011010111101011110101011111}, \ie, 30 bits.
		This compares very favorably to a typical balanced parenthesis representation
		which would use $40$ bits.
	}
	\label{fig:example-bst}
\end{figure}

The encoding of a tree $T$ stores $n$, the number of nodes followed by 
all left subtree sizes of the nodes in preorder, which we compress 
using \emph{arithmetic coding}~\cite{WittenNealCleary1987}.
To encode $\mathit{ls}(v)$, we feed the arithmetic coder with the model that the next symbol
is a number in $[0..\mathit{st}(v)-1]$ (all equally likely). 
Overall we then need $\mathcal H_{\mathit{st}}(T) + 2$
bits to store $T$ when we know $n$.
(Recall that arithmetic coding compresses to the entropy of the given input
plus at most 2 bits of overhead.)
Taking expectations over the tree $T$ to encode, we can thus store a
binary tree with $n$ nodes using 
$H_n + O(\log n) \approx 1.736n + O(\log n)$ bits on average.

We can reconstruct the tree recursively from this code.
Since we always know the subtree size, we know how many bins the next left tree size uses,
and we know how many nodes we have to read recursively to reconstruct the left and right subtrees.
Since arithmetic coding uses fractional numbers of bits for individual symbols 
(left tree sizes), decoding must always start at the beginning.
This of course precludes any efficient operations on the encoding itself,
so a refined approach is called for.

\subsection{Bounding the worst case}
\label{sec:worst-case}

Although optimal in the average case, the above encoding needs $\Theta(n \log n)$ bits
in the worst case: in a degenerate tree (an $n$-path), 
the subtree sizes are $n,n-1,n-2,\ldots,1$ and hence the 
subtree-size entropy is $\sum_{i=1}^n \lg i \sim n \lg n$.
For such trees, we are better off using one of the standard encodings using $2n+O(1)$
bits on any tree, for example Zak's sequence 
(representing each node by a $1$ and each null pointer by a $0$).
We therefore prefix our encoding with 1 extra bit to indicate the used encoding
and then either
proceed with the average-case optimal or the worst-case optimal encoding.
That combines the best of both worlds with constant extra cost (in time and space).
We can thus encode any tree $t\in\mathcal T_n$ in 
\[2\lceil\lg n\rceil \bin+ \min\bigl\{ \mathcal H_{\mathit{st}}(t)+3, 2n+2 \bigr\} \text{ bits,}\]
where the first summand accounts for storing $n$ (in Elias code).
%
	Since we store $n$ alongside the tree shape, this encoding can be used to
	store any tree with \emph{maximal} size $n$ in the given space.

The above encoding immediately generalizes to, and gives optimal space for, a more general family 
of distributions over tree shapes:
We only need that the distribution of the size of the left subtree of a node $v$ 
depends only on the size of $v$'s subtree
(but is conditionally independent of the shapes of $v$'s subtrees and the shape of the tree above $v$).
In the case of random BSTs, this distribution is uniform (so our encoding is best possible), 
but in general, we can adapt the encoding of our trees to any family of left-size distributions,
in particular to the empirical distributions found in any given tree.
This yields better compression for skewed tree shapes.

\section{Tree Covering}
\label{sec:tree-covering}

In this section, we review the tree-covering (TC) technique
for succinct tree data structures.
The idea of tree covering was introduced by Geary, Raman and Raman~\cite{GearyRamanRaman2006}
(for ordinal trees) and later simplified by Farzan and Munro~\cite{FarzanMunro2012};
He, Munro and Rao~\cite{HeMunroRao2012} added further operations for ordinal trees
and Davoodi et al.~\cite{DavoodiNavarroRamanRao2014} designed inorder support.
Tree covering has predominantly been used for ordinal trees, 
and it can be extended to support updates, but 
our presentation is geared towards succinctly storing static \emph{binary} trees.

\subsection{Trees, mini trees, micro trees}
\label{sec:tc-mini-micro}

The main idea of TC is to partition the nodes of a tree into subsets that form contiguous subtrees.
These subtrees are then (conceptually) contracted into a single node, and a second such grouping
step is applied.
We then represent the overall (binary) tree as a (binary) tree of \emph{mini trees}, 
each of which is a (binary) tree of \emph{micro trees}, each of which is a (binary) tree 
of actual nodes:

\medskip
\medskip
\plaincenter{\includegraphics[width=.6\textwidth]{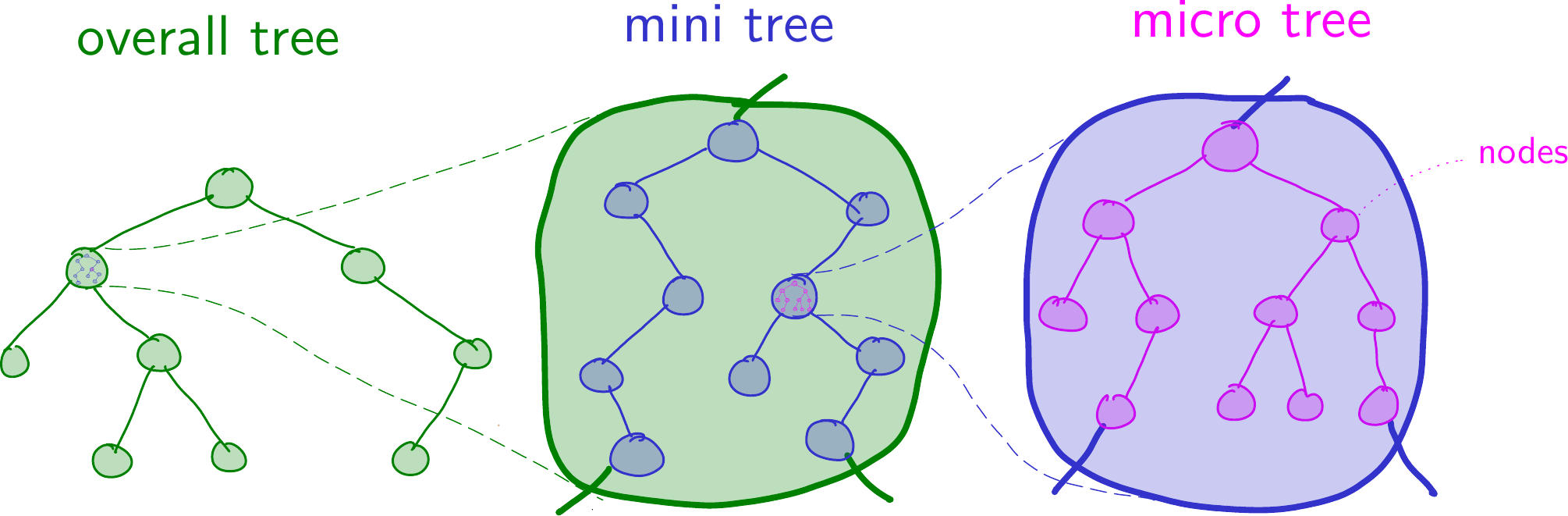}}
\medskip
\medskip

The number of nodes in micro trees is kept small enough so that an encoding
of the local topology of each micro tree fits into less than one word, say $\frac12 \lg n$ bits.
The number of (original) nodes in one mini tree is polylogarithmic (usually $\lg^2 n$),
so that the overall tree (the green level) 
has few enough nodes to fit a non-succinct (\eg, pointer-based) 
representation into $o(n)$ bits of space.
The $O(n/\log^2 n)$ mini trees can similarly be stored with pointers by restricting
them to the polylogarithmic range inside a given mini tree.
More generally, we can afford to store the overall tree (of mini trees, \aka tier-1 macro tree) 
and the mini trees (of micro trees, \aka tier-2 macro trees) 
using any representation that uses $O(n\log n)$ bits for an $n$-node tree.

For the micro trees, we use the so-called ``Four Russian trick'': 
The \emph{type}, \ie, the local topology, for each micro tree
is stored in an array, and used as index into a large precomputed lookup table
to answer any micro-tree-local queries.
Since there are only $2^{\lg n/2} = \sqrt n$ \emph{different} micro-tree types,
this lookup table occupies $o(n)$ bits of space even with precomputed information 
for any pair of nodes.

\subsection{Tree decomposition}
\label{sec:tc-decomposition}

We left open how to find a partition of the nodes into subtrees.
A greedy bottom-up approach suffices to break a tree of $n$ nodes into $O(n/B)$
subtrees of $O(B)$ nodes each~\cite{GearyRamanRaman2006}. 
However, more carefully designed procedures yield additionally
that any subtree has only few edges leading out of this subtree~\cite{FarzanMunro2012}.
While originally formulated for ordinal trees, we note that on binary trees,
the decomposition algorithm of Farzan and Munro~\cite{FarzanMunro2012} 
always creates a \emph{partition} of the nodes,
\ie, there are no subtrees sharing a common root:

\begin{lemma}[{\cite[Theorem~1]{FarzanMunro2012}} for binary trees]
\label{lem:tree-decomposition}
	For any parameter $B\ge 1$, a binary tree with $n$ nodes can be decomposed,
	in linear time, into $\Theta(n/B)$ pairwise disjoint subtrees of $\le 2B$ nodes each.
	Moreover, each of these subtrees has at most three connections to other subtrees:
	\begin{itemize}[itemsep=0pt]
		\item an edge from a parent subtree to the root of the subtree,
		\item an edge to another subtree in the left subtree of the root,
		\item an edge to another subtree in the right subtree of the root.
	\end{itemize}
	In particular, contracting subtrees into single nodes yields again a binary tree.
\end{lemma}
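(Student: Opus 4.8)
The plan is to re-examine the decomposition algorithm of Farzan and Munro for ordinal trees, specialized to binary trees, and verify that in this special case the output is a genuine partition rather than merely a cover. First I would recall the structure of their greedy bottom-up procedure: process nodes in a postorder traversal, maintaining for each node a "packet'' of at most $B$ descendants not yet assigned to a subtree; when the accumulated size of a node's children's packets together with the node itself would exceed the threshold, one closes off components and starts fresh. In the general ordinal setting, the subtlety that forces subtrees to share a root is that a node with many heavy children must be glued into one of its children's components (to keep the number of components $\Theta(n/B)$), which is what creates the ``shared root'' phenomenon. The key observation for binary trees is that a node has at most two children, so the accumulation step never has to merge more than a bounded number of child packets, and the algorithm can always afford to let each node open its own component if needed without blowing up the component count beyond $\Theta(n/B)$.

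The main steps I would carry out are: (1) state the specialized algorithm precisely for binary trees, emphasizing that each node is assigned to exactly one component at the moment it is processed; (2) bound component sizes by $2B$ — each component is grown until it would exceed $B$, and one more node (the branching node joining two child packets) can push it up to, but not beyond, $2B$; (3) bound the number of components by $\Theta(n/B)$ via a charging argument: every "large'' component (size $> B/c$ for a suitable constant) is charged to its own nodes, and "small'' components are created only at branching nodes where two already-closed heavy packets meet, so there are $O(n/B)$ of those too; (4) establish the connectivity structure — since the decomposition respects the recursive left/right split, any component $C$ receives at most one incoming edge (from its parent component, into the root of $C$) and emits edges only from the boundary where a child packet was closed off, and in a binary tree these boundary exits lie either entirely in the left subtree of $\mathrm{root}(C)$ or entirely in the right subtree, giving at most two outgoing edges plus the one incoming edge; (5) conclude that contracting each component to a node yields a binary tree, since each contracted node has a distinguished parent edge and at most one "left'' and one "right'' child edge.

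The step I expect to be the main obstacle is step (4), and more specifically arguing that the outgoing edges of a component are confined to (at most) one exit in the left subtree of its root and one in the right subtree. This is where the binary structure is doing real work: one must show that the greedy procedure, when it closes off a child packet of some node $v$ inside component $C$, does so for at most one node in the left subtree of $\mathrm{root}(C)$ and at most one in the right subtree — otherwise $C$ would have had a chance to absorb more and wouldn't have closed early, or the closed-off pieces would have been merged. I would handle this by induction on the recursive calls of the decomposition, tracking the invariant that whenever control returns from processing a subtree rooted at $u$, either $u$'s packet has been merged upward (contributing no exit edge yet) or $u$'s packet has been closed into its own component (contributing exactly one exit edge, into $u$); combined with the two-children bound this yields the three-connections claim. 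The size and count bounds (steps 2–3) are essentially the same charging arguments as in the ordinal case and I would cite~\cite[Theorem~1]{FarzanMunro2012} for them, noting only the trivial simplifications afforded by bounded degree.
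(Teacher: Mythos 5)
The paper does not actually prove this lemma. It is presented purely as a restatement of Farzan and Munro's Theorem~1 specialized to binary trees, accompanied by a single unproved remark that in this setting the decomposition yields a genuine partition rather than a cover with shared roots; no argument for the ``at most three connections'' structure is given either. Your plan therefore attempts substantially more than the paper does, which is legitimate, but it means the burden is on you to get the algorithmic details right rather than on the paper.

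Within your plan there are two points that need sharpening. First, the reason you offer for why binary trees need no shared roots---``the algorithm can always afford to let each node open its own component''---is not the actual mechanism. Shared roots appear in the ordinal case because a node $v$ with $d$ children whose open packets are all nearly full cannot be merged with them without blowing the $2B$ size budget, so Farzan and Munro split them into $d$ components that each duplicate $v$ as a root. The binary-tree argument is a size-arithmetic fact: a node plus at most two child packets, each held below the threshold $B$ by the algorithm's invariant, has total size at most $2(B-1)+1\le 2B$, so the merge always fits and root duplication is never triggered. You should make this explicit---it is where the binary-tree hypothesis is doing real work and it delivers the $2B$ bound and the partition property in one stroke---but you also need to check the exact threshold discipline in \cite{FarzanMunro2012}, since whether packets are kept strictly below $B$ or merely at most $B$ determines whether the arithmetic closes. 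Second, your invariant in step (4) is stated at the wrong granularity: ``$u$'s packet has been closed \ldots contributing exactly one exit edge, into $u$'' is a per-node statement, but the lemma's claim is a per-component statement (at most one exit below each of the two root subtrees of a fixed component). The structural fact you actually want to establish is Farzan and Munro's guarantee that each component has \emph{at most one non-root node} with a child outside the component; combined with the root of a binary tree having at most two outgoing edges (one per side), this yields the one-parent-edge, at-most-one-exit-on-each-side bound. Since the paper itself only cites \cite{FarzanMunro2012}, you could equally cite it for the $\Theta(n/B)$ count and the $2B$ size, and concentrate your own argument on exactly these two binary-specific points.
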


The decomposition scheme of Farzan and Munro additionally guarantees
that (at least) one of the edges to child subtrees emanates from the root;
we do not exploit this property in our data structure, though.

We found that it simplifies the presentation to assume that each subtree
contains a copy of the root of its child subtrees, \ie, all subtree roots
(except for the overall root) are present in two subtrees: once as root of the subtree
and once as a leaf in the parent subtree. We refer to the copy in the parent
as the \emph{portal} of the (parent) subtree (to the child subtree).
(An equivalent point of view is that we partition the \emph{edges} of the tree
and a subtree contains all nodes incident to those edges.)
By \wref{lem:tree-decomposition}, each subtree has at most one left portal and 
one right portal.

\subsection{Node ids: $\tau$-names}
\label{sec:tc-tau-names}

For computations internal to the data structure, we represent nodes by ``$\tau$-names'', 
\ie, by triples $(\tau_1,\tau_2,\tau_3)$, where $\tau_1$ is the (preorder number of the) 
mini-tree, $\tau_2$ is the (mini-tree-local preorder number of the) micro tree
and $\tau_3$ is the (micro-tree-local preorder number of the) node.
Geary, Raman and Raman~\cite{GearyRamanRaman2006} show that the triple fits in $O(1)$ words
and we can find a node's $\tau$-name given its (global) preorder number, and vice versa,
in constant time using additional data structures occupying $o(n)$ bits of space;
we sketch these because they are typical examples of TC index data structures.

\paragraph{preorder $\mapsto$ $\tau$-name: $\mathit{nodeselect}_{\mathit{preorder}}$}

We store $\tau_1$ and $\tau_2$ for all nodes in preorder in two piecewise-constant arrays.
Since mini resp.\ micro trees are complete subtrees except for up to 
two subtrees that are missing,
any tree traversal visits the nodes of one subtree in at most three contiguous ranges,
so the above arrays change their values only $O(n/\log^2 n)$ resp.\ $O(n/\log n)$ times
and can thus be stored in $o(n)$ bits of space by \wref{lem:piecewise-constant-array}.
In a third piecewise-constant array, we store the $\tau_3$-value for the \emph{first} node
in the current micro-tree range;
we obtain the $\tau_3$-value for any given node by adding the result of 
the runlen operation (\wref{rem:piecewise-constant-further-ops}).

\paragraph{$\tau$-name $\mapsto$ preorder: $\mathit{noderank}_{\mathit{preorder}}$}
The inverse mapping is more complicated in~\cite{GearyRamanRaman2006} since subtrees
can have many child subtrees using their decomposition scheme.
With the stronger properties from \wref{lem:tree-decomposition}, computing a node's preorder 
is substantially simpler.
We store for each mini resp.\ micro tree the following information:
\begin{itemize}
\item the preorder index of the root\\
	(in micro trees: local to the containing mini tree),
\item the (subtree-local) preorder index of the left and right portals\\
	(in mini trees: these indices are \wrt actual nodes, not micro trees),
\item the subtree size of the portal's subtrees\\
	(in micro trees: local to the containing mini tree).
\end{itemize}
For mini trees, these numbers fit in $O(\log n)$ bits each and can thus be stored for all
$O(n/\log^2n)$ mini trees in $o(n)$ space.
For micro trees, we store information relative to the surrounding mini tree, so the numbers
fit in $O(\log \log n$) bits each, again yielding $o(n)$ extra space overall.

To compute the preorder index for node $v$ with $\tau$-name $(\tau_1,\tau_2,\tau_3)$,
we start with the (mini-tree-local) preorder index of $\tau_2$'s root within $\tau_1$
and we add the subtree size of $\tau_2$'s left (right) portal if $\tau_3$ is larger 
than the left (right) portal's micro-tree-local preorder index.
The result is the preorder index $\tau_{3}'$ of $v$ within $\tau_1$ 
(in terms of nodes, not in terms of micro trees).
Applying the same computation, but using $\tau_1$'s portal information and starting with $\tau_3'$,
we obtain $v$'s global preorder index.

\subsection{Lowest common ancestors}
\label{sec:tc-lca}

He, Munro and Rao~\cite{HeMunroRao2012} use known $O(n\log n)$-bit-space 
representations 
of ordinal trees that support constant-time lowest-common-ancestor queries to represent
the tier-1 and tier-2 macro trees, and they give a (somewhat intricate) $O(1)$-time algorithm
to compute lca based on these representations.

A conceptually simpler option is to use a $O(n)$-bit 
data structure for (static, ordinal) trees that supports lca queries by preorder indices,
\eg, the one of Navarro and Sadakane~\cite{NavarroSadakane2014}.
Note that for lca queries and preorder numbers, the distinction in binary trees 
between unary nodes with only a left resp.\ only a right child is irrelevant.
We construct the tree $T_B$ of all micro-tree roots by contracting micro trees into single nodes,
(ignoring the mini-tree level),
and store $T_B$ using the compact ordinal-tree data structure.

Using similar data structures as for $\mathit{noderank}_{\mathit{preorder}}$, 
we can compute the preorder number of a micro-tree root in the micro-tree-root tree $T_B$, 
denoted by $\mathit{noderank}_{\mathit{microRoot}}(\tau_1,\tau_2)$.
The inverse operation, $\mathit{nodeselect}_{\mathit{microRoot}}$,
is similar to $\mathit{nodeselect}_\mathit{preorder}$.

To find the LCA of two nodes $u$ and $v$ in the tree,
we first find
$k_u = \mathit{noderank}_{\mathit{microRoot}}(u)$ and
$k_v = \mathit{noderank}_{\mathit{microRoot}}(v)$.
If $k_u = k_v$, both nodes are in the same micro tree and so is their LCA,
so we use the (micro-tree-local) lookup table to find the (precomputed) LCA.

Otherwise, when $k_u \ne k_v$, we determine $k = \mathit{lca}(T_B, k_u, k_v)$, the
LCA of the micro-tree roots in $T_B$.
If $k_u \ne k \ne k_v$, then the paths from $k_u$ and $k_v$ first meet at micro-tree root $k$ 
and we return $\mathit{lca}(u,v) = \mathit{nodeselect}_\mathit{microRoot}(k)$.

The remaining case is that the LCA is one of $k_u$ or $k_v$; 
\withoutlossofgenerality say $k = k_u$.
This means that the micro-tree root $r = \mathit{nodeselect}_\mathit{microRoot}(k_u)$ 
of $u$'s micro-tree root is an ancestor of~$v$. 
\droppedForArXiv{
Note that this is not necessarily the case for $u$ and $v$ itself, 
consider for example nodes with preorder index $22$ and $26$ in \wref{fig:blocks-example}.
}
Since $v$ is not in $r$'s micro tree, but in $r$'s subtree, 
it must be in the subtree of one of the portals.
We find the right one 
by selecting the portal $p$ that fulfills 
$p = \mathit{lca}(T_B,p,v)$.
(If $T_B$ supports ancestor queries more efficiently, 
we can also ask whether $p$ is an ancestor of~$v$.)
Finally, as in the intra-micro-tree case, we find the lca of $u$ and $p$ within the micro tree.

\subsection{Inorder rank and select}
\label{sec:tc-inorder}

Davoodi et al.~\cite{DavoodiNavarroRamanRao2014} describe $o(n)$ bit data structures 
that allow to map preorder to inorder and vice versa.
For computing the inorder of a node $v$, they use the equation
$\mathit{inorder}(v) = \mathit{preorder}(v) + \mathit{ls}(v) - \mathit{rightdepth}(v)$,
where $\mathit{rightdepth}(v)$ is the the number of right-edges on the path from the root to $v$,
\ie, the depth where only following right-child pointers counts.%
\footnote{
	Davoodi et al.\ use Ldepth for this because they consider the path \emph{from $v$ to the root}.
	We found ``right depth'' (corresponding to the direction from root to node) more compatible
	with widespread convention.
}
Storing the global right depth of each mini-tree root, the mini-tree-local right depth of 
each micro-tree root and micro-tree-local right-depth info in the lookup table,
we can obtain the right-depth of any node by adding up these three quantities.

Instead of directly mapping from inorder to preorder (as done in~\cite{DavoodiNavarroRamanRao2014}),
we can adapt the above strategy for mapping \emph{inorder} to $\tau$-name.
Storing $\tau_1$ and $\tau_2$ for all nodes in inorder works the same as above.
We cannot store $\tau_3$ directly in a piecewise constant array (it changes to often), 
but we can store the micro-tree-local \emph{inorder} index, $\tau_3^{\textrm{in}}$.
Finally, we use the lookup table to translate $\tau_3^{\textrm{in}}$ to $\tau_3$.

\section{Entropy trees}
\label{sec:entropy-trees}

The dominating contribution (in terms of space) in TC comes from the array of micro-tree types.
All other data structures~-- for storing overall tree and mini trees, 
as well as the various index data structures to support queries~-- fit in $o(n)$ bits of space.
Encoding micro trees using one of the $2n$-bit encodings for $\mathcal T_n$
(\eg, BP, DFUDS, or Zaks' sequence),
the combined space usage of the types of all micro trees is $2n+o(n)$ bits.

The micro-tree types are solely used as an index for the lookup table;
how exactly the tree topology is encoded is immaterial.
It is therefore possible to replace the fixed-length encoding by one that
adapts to the actual input.
Using variable-cell arrays (\wref{lem:variable-cell-arrays}), 
the dominant space is the sum of the lengths of the micro-tree types.

Davoodi et al.~\cite{DavoodiNavarroRamanRao2014} used the ``ultra-succinct'' encoding proposed by
Jansson, Sadakane and Sung~\cite{JanssonSadakaneSung2012} for micro trees to obtain
a data structure that adapts to the entropy of the node-degree distribution.
This approach is inherently limited to non-optimal compression since it only depends on
the local order of a fixed number of values in the input array.
Using our new encoding for binary tree we can overcome this limitation.

Combining the tree-covering data structure for binary trees described in \wref{sec:tree-covering}
with the (length bounded) subtree-size code for binary trees from \wref{sec:encoding}
yields the following result.

\begin{theorem}[Entropy trees]
	Let $t\in\mathcal T_n$ be a (static) binary tree on $n$ nodes.
	There is a data structure that occupies $\min\{\mathcal H_{\mathit{st}}(t),2n\} + o(n)$ 
	bits of space and 
	supports the following operations in $\Oh(1)$ time:
	\begin{itemize}[itemsep=0pt]
	\item find the node with given pre- or inorder index,
	\item find the pre- and inorder index of a given node,
	\item compute the lca of two given nodes.
	\end{itemize}
\end{theorem}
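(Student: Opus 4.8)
The plan is to instantiate the tree-covering machinery of \wref{sec:tree-covering} with block-size parameter $B = \Theta(\log n)$ chosen so that each micro tree has at most $\frac12 \lg n$ nodes, and then to replace the generic $2n$-bit micro-tree encoding (BP / Zaks) by the length-bounded subtree-size code of \wref{sec:worst-case}. Concretely, for each micro tree $\mu$ I store the prefix bit, the value of $n_\mu = |\mu|$ in Elias code, and then either $\mathcal H_{\mathit{st}}(\mu) + 3$ bits of arithmetic-coded left-subtree-sizes or the $2n_\mu + 2$-bit fallback, whichever is smaller; this is exactly the bound $2\lceil\lg n_\mu\rceil + \min\{\mathcal H_{\mathit{st}}(\mu)+3, 2n_\mu+2\}$ from \wref{sec:worst-case}. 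These variable-length strings are concatenated into a variable-cell array (\wref{lem:variable-cell-arrays}), which supports constant-time access to the $i$th micro-tree code. All remaining components — the tier-1 macro tree, the mini trees, the micro-tree-root tree $T_B$ for lca, and the $o(n)$-bit index structures for $\mathit{noderank}_{\mathit{preorder}}$, $\mathit{nodeselect}_{\mathit{preorder}}$, inorder$\leftrightarrow\tau$-name, right-depth, etc. — are taken verbatim from \wref{sec:tree-covering} and contribute $o(n)$ bits. The lookup table is indexed by micro-tree type, but now ``type'' means the (decoded) topology; since there are still at most $\sqrt n$ distinct micro trees of $\le\frac12\lg n$ nodes, the table is $o(n)$ bits and all micro-tree-local queries (local lca, local pre/inorder conversions) are answered in $O(1)$ via table lookup after decoding the type once.

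The space analysis is the heart of the argument. The dominant term is $\sum_\mu \bigl(2\lceil\lg n_\mu\rceil + \min\{\mathcal H_{\mathit{st}}(\mu)+3, 2n_\mu+2\}\bigr)$ plus the $O(m \lg\lg n + m\lg(\max s_i) + O(m))$ overhead of the variable-cell array, where $m = \Theta(n/\log n)$ is the number of micro trees and $\max s_i = O(\log n)$. Since $m = O(n/\log n)$, all of $m\lg\lg n$, $m\lg(\max s_i)$, $m\lceil\lg n_\mu\rceil$, and $O(m)$ are $O(n\log\log n/\log n) = o(n)$. So the space is $\sum_\mu \min\{\mathcal H_{\mathit{st}}(\mu), 2n_\mu\} + o(n)$, and I claim $\sum_\mu \mathcal H_{\mathit{st}}(\mu) \le \mathcal H_{\mathit{st}}(t)$: this is because $\mathcal H_{\mathit{st}}(t) = \sum_{v\in t}\lg(\mathit{st}_t(v))$ and, for any node $v$ lying in micro tree $\mu$, its micro-tree-local subtree size $\mathit{st}_\mu(v)$ is at most its global subtree size $\mathit{st}_t(v)$ (contracting/removing descendant micro trees only shrinks subtrees), so $\sum_{v\in\mu}\lg(\mathit{st}_\mu(v)) \le \sum_{v\in\mu}\lg(\mathit{st}_t(v))$ and summing over the partition into micro trees gives the claim. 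Symmetrically $\sum_\mu 2n_\mu = 2n$ since the micro trees partition the nodes, so term-by-term $\sum_\mu\min\{\mathcal H_{\mathit{st}}(\mu), 2n_\mu\} \le \min\{\mathcal H_{\mathit{st}}(t), 2n\}$, yielding the stated bound.

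For the operations, $\mathit{nodeselect}_{\mathit{preorder}}$, $\mathit{noderank}_{\mathit{preorder}}$, $\mathit{nodeselect}_{\mathit{inorder}}$, $\mathit{noderank}_{\mathit{inorder}}$, and $\mathit{lca}$ are carried out exactly as in \wref{sec:tc-tau-names}, \wref{sec:tc-lca} and \wref{sec:tc-inorder}: these only consume the $o(n)$-bit TC index structures plus micro-tree-local answers, and the latter are obtained by first reading the micro-tree type out of the variable-cell array (constant time), then a single lookup-table access (constant time). The key point to check is that decoding one micro-tree code to recover its type is $O(1)$: the code has $O(\log n)$ bits and fits in $O(1)$ machine words, so we can either decode it with another precomputed table (codes are short enough that a "code $\mapsto$ type" table is also $o(n)$ bits since there are only $\sqrt n$ distinct types hence $\le\sqrt n\cdot\mathrm{poly}\log n$ distinct codewords), or note that we only ever need the type as an index and the variable-cell array already hands us the raw bits.

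The main obstacle I anticipate is purely bookkeeping rather than conceptual: one must be careful that the micro-tree \emph{local} encoding is self-delimiting (it is, because $n_\mu$ is stored first in Elias code, then the number of arithmetic-coded symbols and the size of each symbol's alphabet is determined recursively by the already-decoded subtree sizes) and that the variable-cell array's notion of "access = find starting position" suffices — which it does, because after locating the start we read a constant number of words and the code tells us its own length. A secondary subtlety is confirming that Farzan--Munro decomposition with $B = \Theta(\log n)$ yields micro trees with $\le\frac12\lg n$ nodes and that the number of distinct micro-tree \emph{shapes} (not just those with $\le 2B$ nodes but genuinely arising) is $\le 2^{\lg n/2} = \sqrt n$, which is immediate from \wref{lem:tree-decomposition} and the $2n$-bit encodability of any binary tree. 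None of this requires new ideas beyond \wref{sec:encoding} and \wref{sec:tree-covering}; the theorem follows by assembling the pieces.
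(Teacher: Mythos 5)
Your proposal is correct and follows essentially the same route as the paper: instantiate tree covering, replace the $2n$-bit micro-tree encoding by the length-bounded subtree-size code, store the codes in a variable-cell array, and bound the total via the key observation that $\mathit{st}_\mu(v)\le\mathit{st}_t(v)$ for each node $v$ in micro tree $\mu$, so $\sum_\mu\mathcal H_{\mathit{st}}(\mu)\le\mathcal H_{\mathit{st}}(t)$ while $\sum_\mu 2|\mu|=2n+o(n)$. Your write-up is in fact somewhat more explicit than the paper's (which compresses this into a four-line chain of inequalities and defers all index-structure details to the prior section), but the ideas, decomposition parameters, and space accounting are the same.
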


\begin{proof}
The correctness of operations and size of supporting data structures directly 
following from the previous work on tree covering (see \wref{sec:tree-covering})
and the discussion above.
It remains to bound the sum of code lengths of micro-tree types.
Let $\mu_1,\ldots,\mu_\ell$ denote the micro trees resulting from the tree decomposition.
We have
\begin{align*}
		\sum_{i=1}^\ell |\mathit{type}(\mu_i)|
	&\wwrel\le
		\sum_{i=1}^\ell \sum_{v\in\mu_i} \Bigl(\mathit{st}_{\mu_i}(v)+2\Bigr)
\\	&\wwrel\le
		\sum_{i=1}^\ell \sum_{v\in\mu_i} \mathit{st}_{t}(v) 
		\wwbin+ \Oh\biggl(\frac n{\log n}\biggr)
\\	&\wwrel\le
		\sum_{v\in t} \mathit{st}_{t}(v) 
		\wwbin+ \Oh\biggl(\frac n{\log n}\biggr)
\\	&\wwrel=
		\mathcal H_{\mathit{st}}(t)
		\wwbin+ \Oh\biggl(\frac n{\log n}\biggr).
\end{align*}
Moreover, $|\mathit{type}(\mu_i)| \le 2 |\mu_i|+\Oh(1)$, so also 
$\sum_{i=1}^\ell |\mathit{type}(\mu_i)| \le 2n + \Oh\bigl(\frac n{\log n}\bigr)$.
\end{proof}

By using this data structure on the Cartesian tree of an array,
we obtain a compressed RMQ data structure.

\begin{corollary}[Average-case optimal succinct RMQ]
	There is a data structure that supports (static) range-minimum queries on an array $A$ of 
	$n$ (distinct) numbers in $\Oh(1)$ worst-case time and which
	occupies $H_n + o(n) \approx 1.736 n + o(n)$
	bits of space on average over all possible permutations of the elements in $A$.
	The worst case space usage is $2n + o(n)$ bits.
\end{corollary}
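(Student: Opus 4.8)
The plan is to instantiate the Entropy Trees data structure on the Cartesian tree of $A$ and to combine it with the reduction of RMQ to LCA from \wref{sec:rmq-cartesian-tree-lca}.

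First I would, at preprocessing time, compute the Cartesian tree $T\in\mathcal T_n$ of $A[1..n]$ (in linear time) and build the Entropy Trees structure of the preceding theorem on $T$; the array $A$ itself is then discarded. By the isomorphism recalled in \wref{sec:rmq-cartesian-tree-lca}, the query $\mathit{rmq}_A(i,j)$ is answered as
\[
  \mathit{noderank}_{\mathit{inorder}}\bigl(
    \mathit{lca}_T(
      \mathit{nodeselect}_{\mathit{inorder}}(i),
      \mathit{nodeselect}_{\mathit{inorder}}(j)
    )
  \bigr),
\]
and each of these three operations is supported in $\Oh(1)$ worst-case time by the Entropy Trees structure, so queries take $\Oh(1)$ worst-case time in the encoding model.

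Next I would bound the space. By the preceding theorem, the structure built on any fixed $t\in\mathcal T_n$ uses $\min\{\mathcal H_{\mathit{st}}(t),2n\}+o(n)$ bits, where the $o(n)$ overhead (the tree-covering navigation indices and the shared Four-Russians lookup table) depends on $n$ only and not on $t$. Since $\min\{\mathcal H_{\mathit{st}}(t),2n\}\le 2n$, the worst-case space is $2n+o(n)$ bits. For the average case, let $T_n$ be the random shape of the Cartesian tree of a uniformly random permutation of the $n$ elements of $A$. By \wref{sec:random-rmq-random-bst}, $T_n$ is distributed as a random BST, and \weqref{eq:Hn=E-HstT} states $\E{\mathcal H_{\mathit{st}}(T_n)}=H_n$. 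Using $\min\{\mathcal H_{\mathit{st}}(T_n),2n\}\le\mathcal H_{\mathit{st}}(T_n)$ pointwise and taking expectations, the expected space is at most $\E{\mathcal H_{\mathit{st}}(T_n)}+o(n)=H_n+o(n)$, which is $\approx 1.736\,n+o(n)$ by the solution of \weqref{eq:entropy-rec} quoted in \wref{sec:lower-bound}.

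I do not expect a genuine obstacle here; the corollary is essentially an assembly step. The one point that deserves a sentence of care is that the $o(n)$ term in the Entropy Trees theorem must be uniform over all input trees — otherwise its expectation need not be $o(n)$ — and it is, since that overhead consists solely of the tree-covering index data structures and the lookup table, whose sizes depend on $n$ alone. Everything else follows directly from the reduction of \wref{sec:rmq-cartesian-tree-lca}, the distributional identity of \wref{sec:random-rmq-random-bst}, and \weqref{eq:Hn=E-HstT}.
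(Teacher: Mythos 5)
Your proposal is correct and follows exactly the route the paper intends: the paper states this corollary as an immediate consequence of the Entropy Trees theorem ("By using this data structure on the Cartesian tree of an array, we obtain a compressed RMQ data structure") without further elaboration, and your write-up simply spells out that assembly, including the correct use of \weqref{eq:Hn=E-HstT} for the expectation and the careful observation that the $o(n)$ overhead is uniform over tree shapes.
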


\paragraph{Further operations}

Since micro-tree types are only used for the lookup table, 
all previously described index data structures for other operations are not affected by
swapping out the micro-tree encoding.
Entropy trees can thus support the full range of (cardinal-tree) operations
listed in \cite[Table~2]{FarzanMunro2012}.

\section{Hyper-succinct trees}
\label{sec:hyper-succinct-trees}

The subtree-size code yields optimal compression for random BSTs, 
but is not a good choice for certain other shape distributions.
However, we obtain instance-optimal compression by yet another encoding for micro trees:
By treating each occurring micro-tree type as a single symbol and counting how often it occurs 
when storing a given tree $t$, we can compute a (length-bounded) Huffman code
for the micro-tree types.
We call the resulting tree-covering data structure \emph{``hyper succinct''}
since it yields better compression than ultra-succinct trees.
Indeed, it follows from the optimality of Huffman codes that no other tree-covering-based
data structure can use less space.

It is, however, quite unclear how good the compression for a given tree is;
note in particular that the set of micro trees is a mixture of subtrees
at the fringe of $t$ and subtrees from the ``middle'' of $t$ where one or two 
large subtrees have been pruned away.
How repetitive this set of shapes is will depend not only on $t$, but also on the
tree-decomposition algorithm and the size of micro trees.

\bibliographystyle{plainurl}
\bibliography{refs}

\end{document}